\documentclass[10pt,conference]{IEEEtran}
\usepackage{cite,graphicx,psfrag,amsmath,amssymb,subfigure}
\newtheorem{theorem}{Theorem}

\def\Tiny{\fontsize{6pt}{6pt}\selectfont}

\begin{document}
\title{Improved Redundancy Bounds for Exponential Objectives}
\bibliographystyle{IEEEtran}
\author{\authorblockN{Michael~B.~Baer}
\authorblockA{Vista Research\\
Monterey, CA, USA\\
Email: calbear\hspace{1sp}@\hspace{1sp}ieee.org}}
\maketitle

\begin{abstract}
We present new lower and upper bounds for the compression rate of binary prefix codes optimized over memoryless sources according to two related exponential codeword length objectives.  The objectives explored here are exponential-average length and exponential-average redundancy.  The first of these relates to various problems involving queueing, uncertainty, and lossless communications, and it can be reduced to the second, which has properties more amenable to analysis.  These bounds, some of which are tight, are in terms of a form of entropy and/or the probability of an input symbol, improving on recently discovered bounds of similar form.  We also observe properties of optimal codes over the exponential-average redundancy utility.
\end{abstract}

\section{Introduction}
\label{intro}

Among Shannon's many observations in the seminal paper on information theory was that, by increasing block size, the compression rate of a block code for a memoryless source can get arbitrarily close to the source entropy rate.  In particular, given a block of Shannon entropy $H$ bits, prefix coding methods such as Huffman coding can code the block with an expected length $L$, where $L \in [H,H+1)$.  If $p_i \in (0,1)$ is the probability of the $i$th item, which has a codeword of length $l_i$, then
$$L \triangleq \sum_i p_il_i \mbox{  and  } H \triangleq - \sum_i p_i\lg p_i$$
where $\lg \triangleq \log_2$ and the sum is, without loss of generality, taken over the $n$ possible items.  A constant absolute difference translates into an arbitrarily close-to-entropy compression ratio as blocks grow in size without bound.  The lower bound is fundamental to the definition of entropy, while the upper bound is easily seen by observing the suboptimal Shannon code.  This code, that in which an event of probability $p$ is coded into a codeword of length $\lceil -\lg p \rceil$, will always have expected length less than $H+1$ and never have expected length less than~$L$.

This \textit{unit-sized bound} is preserved even for many nonlinear optimization criteria.  Such criteria are encountered in a variety of lossless compression problems in which expected length is no longer the value to minimize.  In particular, consider
\begin{equation}
L_a = L_a({\mbox{\boldmath $l$}},p)
\triangleq \log_a \sum_{i=1}^n p_ia^{l_i}.
\label{one}
\end{equation}
Minimizing this utility solves several problems involving compression for queueing\cite{Humb2}, compression with uncertainty\cite{ReCh}, one-shot communications\cite{Baer07}, and unreliable communications\cite{Baer11}.  It is closely related to R\'{e}nyi entropy 
\begin{equation}
H_\alpha(p) \triangleq \frac{1}{1-\alpha}\lg \sum_{i=1}^n p_i^\alpha
\label{entropy}
\end{equation}
in the sense that, for $\alpha = 1/(1+\lg a)$,
$$H_\alpha(p) \leq L_a^{\mathop{\rm opt}} < H_\alpha(p)+1.$$
Limits define R\'{e}nyi entropy for $0$, $1$, and $\infty$, so that
$$H_0(p) \triangleq \lim_{\alpha \downarrow 0} H_\alpha(p) = \lg
\|p\|$$ (the logarithm of the number of events in $p$), $$H_1(p) \triangleq
\lim_{\alpha \rightarrow 1} H_\alpha(p) = - \sum_{i=1}^n p_i \lg
p_i$$ (the Shannon entropy of $p$), and $$H_\infty(p) \triangleq
\lim_{\alpha \uparrow \infty} H_\alpha(p) = -\lg \max_i p_i$$ (the
min-entropy).  Over a constant $p$, entropy is nonincreasing over~$\alpha$\cite{Ren1}.

$L_a$ is also closely related to \textit{exponential-average redundancy} or \textit{exponential redundancy}
$$R^d(p,{\mbox{\boldmath $l$}}) \triangleq \frac{1}{d} \lg \sum_i p_i^{1+d} 2^{dl_{i}} =
\frac{1}{d} \lg \sum_i p_i 2^{d(l_{i}+\lg p_i)}.$$
If we substitute $d =\lg a$ and 
$$\hat{p}_i \triangleq \frac{p_i^{{\alpha}}}{\sum_{k=1}^n p_k^{{\alpha}}} = 
\frac{p_i^{{\alpha}}}{2^{(1-{\alpha})H_{{\alpha}}(p)}}$$
we find
\begin{equation}
\begin{array}{rcl}
R^{\lg a}(\hat{p},{\mbox{\boldmath $l$}}) 
&=& \displaystyle \frac{1}{\lg a} \lg \sum_{i=1}^n {\hat{p}_i}^{1+\lg a} a^{l_i} \\
&=& \displaystyle \log_a \sum_{i=1}^n p_ia^{l_i} - 
\log_a \left(\sum_{i=1}^n p_i^{{\alpha}}\right)^{\frac{1}{{\alpha}}} \\
&=& \displaystyle L_a({\mbox{\boldmath $l$}},p) - H_{{\alpha}}(p).
\end{array}
\label{trans}
\end{equation}
This transformation --- shown previously in \cite{BlMc} --- provides a reduction from $L_a$ to $R^d$, allowing bounds for the former to apply --- with the addition of the entropy term --- to the latter.

For both the traditional and exponential utilities, we can improve on the unit-sized bound given the probability of one of the source events.  This was first done with the constraint that the given probability be the most probable of these events\cite{Gall}, but here, as in some subsequent work\cite{YeYe2,MPK,Baer11}, we drop this constraint.  Without loss of generality, we call the source symbols $\{1,2,\ldots,n\} = {\mathcal X}$ (from most to least probable), and call the symbol with known probability~$j$; that is, $p_j$ is known, but not necessarily $j$ itself.

In traditional linear optimization, upper and lower bounds for $R^d$ are known such that probability distributions can be found achieving or approaching these bounds\cite{YeYe2,MPK}; i.e., they are \textit{tight}.  In the exponential cases, \cite{Baer11} took $a \uparrow \infty$ ($d \uparrow \infty$) and $a \downarrow 1$ ($d \downarrow 0$), using inequality relations to find not-necessarily-tight bounds on these problems in terms of tight bounds for the limit cases.  The goal here is to improve the bounds.

We seek to find an upper bound $\omega^d(p_j)$ and lower bound $o^d(p_j)$ such that, for every probability distribution $p$, optimal codeword lengths ${\mbox{\boldmath $l$}}$ satisfy:
$$0 \leq o^d(p_j) \leq \min_{\mbox{\boldmath \scriptsize $l$}} R^d(p,{\mbox{\boldmath $l$}}) < \omega^d(p_j) \leq 1$$
for any~$j$.  For such values, (\ref{trans}) results in:
\begin{eqnarray*}
o^{\log a}\left(p_j^{\tilde{\alpha}} 2^{(\tilde{\alpha}-1)H_{\tilde{\alpha}}(p)}\right) 
&\leq& L_a^{\mathop{\rm opt}}(p) - H_{\tilde{\alpha}}(p) \\
&<& \omega^{\log a}\left(p_j^{\tilde{\alpha}} 2^{(\tilde{\alpha}-1)H_{\tilde{\alpha}}(p)}\right)
\end{eqnarray*}
where $\tilde{\alpha} = 1/(1+\lg a)$ and $L_a^{\mathop{\rm opt}}(p)$ denotes the utility for optimal lengths given $p$ and~$a$.  Thus we can restrict ourselves to exponential redundancy, which is more amenable to the analysis used here.

\section{Applications}

\subsection{$d>0$ ($a>1$)}

Most applications of the exponential length utility concern only $a>1$ ($d>0$ for the redundancy equivalent).  The first known application, introduced in Humblet's dissertation\cite{Humb0,Humb2}, is in a queueing problem originally posed by Jelinek\cite{Jeli}.  Codewords coding a random source are temporarily stored in a finite buffer; these are chosen such that overflow probability is minimized.

Another application considers a source with uncertain probabilities, one in which we only know that the relative entropy between the actual probability mass function and $p$ is within a known bound\cite{ReCh}.  A third, more recent application, omitted in the interest of brevity but described in \cite{Baer11}, is a modified case of the application in the next paragraph.

\subsection{$d<0$ ($a<1$)}

An application for $a<1$ involves single-shot communications with a communication channel having a window of opportunity of geometrically-distributed length (in bits)\cite{Baer07}.  If the distribution has parameter $a$, the probability of successful
transmission is
$${\mathbb P}[\mbox{success}] = a^{L_a(p,{\mbox{\boldmath \scriptsize $l$}})} = \sum_{i=1}^n p_i a^{l_i}.$$
Maximizing this is equivalent to minimizing (\ref{one}).  The solution is trivial for $a \leq 0.5$ ($d \leq -1$), a case not covered by R\'{e}nyi entropy, and thus not applicable here.

\section{Bounds}

The variation of the Huffman algorithm which finds an optimal code for exponential redundancy differs as follows:  While Huffman coding inductively pairs the two lowest probabilities (weights) $w_x$ and $w_y$, combining them into an item \textit{weighted} $f(w_x,w_y) \triangleq w_x + w_y$, optimizing exponential redundancy requires the combined item to be weight
\begin{equation}
f^d(w_x,w_y) \triangleq\left(2^d w_x^{1+d}+2^d w_y^{1+d}\right)^{\frac{1}{1+d}} .
\label{mmprcomb}
\end{equation}
The optimality of this is shown in \cite{Park} and can illustrated with an exchange argument (e.g., \cite[pp.~124-125]{CoTh2} for the linear case).  An exchange argument also inductively illustrates that such an algorithm, depending on how ties are broken, can achieve \textit{any} optimal set of codeword lengths: Clearly the only optimal code is obtained for $n=2$.  Let $n'$ be the smallest $n$ for which there is a set of $\{l_i\}$ that is optimal but cannot be obtained via the algorithm.  Since $\{l_i\}$ is optimal, consider the two smallest probabilities, $p_{n'}$ and $p_{n'-1}$.  In this optimal code, two items having these probabilities (although not necessarily items $n'-1$ and $n'$) must have the longest codewords and must have the same codeword lengths.  Otherwise, we could exchange the codeword with a longer codeword corresponding to a more probable item and improve the utility function, showing nonoptimality.  Merge these two items into one with probability $f^d(p_{n'}, p_{n'-1})$, as per the algorithm.  Because of the nature of $f^d$, this is a reduced problem, i.e., an equivalent optimization to the original problem.  This means that there is a set of lengths optimal for this problem such that all non-merged items are identical to the corresponding $l_i$, while the merged item is simply one shorter than the longest~$l_i$.  Since we inductively assumed all optimal length sets could be produced for $n'-1$, the assumption is verified for all~$n$.

Related observations form the following theorem, similar to that in \cite{Baer11} for a non-exponential utility:

\begin{theorem}
Suppose we apply (\ref{mmprcomb}) to find a Huffman-like code tree in order to minimize exponential redundancy $R^d(p,{\mbox{\boldmath $l$}})$ for $d > -1$.  Then the following holds for any optimal~${\mbox{\boldmath $l$}}$:
\begin{enumerate}
\item For $d>0$, items are always merged by nondecreasing weight and the total probability of any subtree is no greater than the weight of the (root of the) subtree.  For $d<0$, the total probability of any subtree is no less than the weight of the subtree.
\item The weight of the root of the coding tree is $w_{\mathop{\rm root}} = 2^{R^d(p,{\mbox{\scriptsize \boldmath $l$}})}$.
\item If $p_1 \leq f^d(p_{n-1},p_n)$, then an optimal code can be represented 
by a \textit{complete tree}, that is, a tree
with leaves at depth $\lfloor \lg n \rfloor$ and $\lceil \lg n \rceil$
only (with $\sum_{i} 2^{-l_{i}} = 1$).
\end{enumerate}
\label{complete}
\end{theorem}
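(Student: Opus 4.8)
The plan is to handle the three parts with a single workhorse, a \emph{subtree weight formula}. I would first prove, by a routine induction over the merges of the Huffman-like algorithm built from (\ref{mmprcomb}), that the weight attached to any node $v$ of the code tree is the quantity that the formula of part~2 would assign if the code tree were just the subtree rooted at $v$: it depends only on the leaves descending from $v$ and their depths \emph{below} $v$. The base case is a single leaf, where the weight is $p_i$, and the inductive step is immediate from (\ref{mmprcomb}) once one notes that multiplying a node's weight by $2^d$ is exactly the effect of deepening every leaf below it by one level. Specializing to the root, where ``depth below $v$'' is the codeword length, yields part~2.

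For part~1 I would extract two elementary facts about $f^d$, valid for all $d>-1$. Monotonicity: since $t\mapsto t^{1+d}$ is increasing, $f^d$ is nondecreasing in each argument. Size comparisons: for $d>0$ one has $f^d(w_x,w_y)\ge(2^d w_y^{1+d})^{1/(1+d)}\ge\max(w_x,w_y)$ with $w_y$ the larger, and, since $t\mapsto t^{1+d}$ is then convex, $f^d(w_x,w_y)^{1+d}=2^d(w_x^{1+d}+w_y^{1+d})\ge(w_x+w_y)^{1+d}$, i.e.\ $f^d(w_x,w_y)\ge w_x+w_y$; for $-1<d<0$ the map is concave and the last inequality reverses, $f^d(w_x,w_y)\le w_x+w_y$. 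The ``merged by nondecreasing weight'' statement for $d>0$ then follows from monotonicity together with $f^d\ge\max$ by the usual Huffman argument: the two nodes combined at any step each weigh at least as much as both nodes combined at the previous step, so by monotonicity the newly created node weighs at least as much as the previous one. The ``subtree probability versus subtree weight'' statements follow by induction over the tree using the size comparisons: at a leaf the two are equal, and if $v$ has children $a,b$ with $P_a\le w_a$, $P_b\le w_b$ (case $d>0$) then $P_v=P_a+P_b\le w_a+w_b\le f^d(w_a,w_b)=w_v$, the case $d<0$ being the same with every inequality reversed.

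Part~3 I would prove by induction on $n$; for $n\le 2$ the only code is the complete tree. Assume $p_1\le f^d(p_{n-1},p_n)$. The algorithm may begin by merging the two smallest weights $p_{n-1},p_n$ into a node $w'=f^d(p_{n-1},p_n)$, and in the resulting $(n-1)$-item instance $w'$ is the \emph{largest} weight; by monotonicity of $f^d$ the hypothesis survives, since there it reads $w'\le f^d(p_{n-3},p_{n-2})$ and $f^d(p_{n-3},p_{n-2})\ge f^d(p_{n-1},p_n)=w'$ (for $n=3$ the two-item reduced instance needs no hypothesis). By the induction hypothesis that instance has an optimal complete tree, and by an exchange argument of the kind already used in the preamble we may put $w'$, being largest, at a leaf of minimum depth $\lfloor\lg(n-1)\rfloor$. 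Re-expanding that leaf into the two leaves $p_{n-1},p_n$ one level lower changes no node weight --- the old leaf value $w'$ is exactly the value $f^d(p_{n-1},p_n)$ of the internal node that replaces it --- so by part~2 it preserves $R^d$; and since the preamble's exchange argument also lets us assume $p_{n-1},p_n$ are siblings at maximum depth in \emph{some} optimal code of the $n$-item instance, the expanded tree is optimal there. Its new leaves sit at depth $\lfloor\lg(n-1)\rfloor+1$, and a short case check --- using $\lfloor\lg(n-1)\rfloor=\lceil\lg n\rceil-1$ and $\lceil\lg(n-1)\rceil\in\{\lceil\lg n\rceil-1,\lceil\lg n\rceil\}$, and noting that when $n$ is a power of $2$ the reduced tree has $w'$ as its unique shallowest leaf --- shows every leaf of the expanded tree has depth $\lfloor\lg n\rfloor$ or $\lceil\lg n\rceil$, so it is complete.

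I expect the real obstacle to lie entirely in part~3: the index bookkeeping that the reduced instance's complete tree, after moving $w'$ to a shallowest leaf and re-expanding, is again a complete tree of the shape required for $n$ items, and the one genuinely non-formal step, namely that an optimal code of the reduced instance grows back to an optimal code of the original --- which is where the preamble's exchange argument, rather than the weight recursion, carries the weight. Parts~1 and~2 should be essentially mechanical given the subtree weight formula.
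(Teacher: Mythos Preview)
Your proposal is correct. Parts~1 and~2 follow essentially the paper's line: the paper also proceeds by induction on the number of items, proving $f^d(w_x,w_y)\gtrless w_x+w_y$ (it cites the power-mean inequality rather than arguing via convexity of $t\mapsto t^{1+d}$, but this is the same fact), and derives the root-weight identity by collapsing the last merge. Your explicit ``subtree weight formula'' is a clean way to package what the paper does implicitly.

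Part~3 is where the two approaches genuinely diverge. The paper argues operationally: it views the Huffman-like procedure as a priority queue, observes that $f^d$ is monotone in each argument so merged weights are created in nondecreasing order, and notes that under the hypothesis $p_1\le f^d(p_{n-1},p_n)$ the very first merged item can be inserted at the tail --- hence so can every subsequent one. It then invokes an external result (Lemma~2 of \cite{Baer07}) stating that this ``always insert at tail'' behaviour suffices for the resulting tree to be complete. Your route is instead a direct induction on~$n$: you check the hypothesis survives the reduction (because $w'=f^d(p_{n-1},p_n)$ becomes the maximum weight and $f^d(p_{n-3},p_{n-2})\ge w'$ by monotonicity), place $w'$ at a shallowest leaf of the inductively obtained complete tree, re-expand, and do the depth bookkeeping. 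The paper's version is shorter but leans on an outside lemma; yours is self-contained and makes the tree shape explicit, at the cost of the power-of-two case analysis you flag at the end.
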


\begin{proof}
Again we use induction, this time using trivial base cases of sizes $1$ and $2$, and assuming the propositions true for sizes $n-1$ and smaller.  We assume without loss of generality that, for size $n$, items $n-1$ and $n$ are the first to be merged.  We use weight terminology ($w$) instead of probabilities ($p$) because reduced problems need not have weights sum to~$1$.

The subtree part of the first property considers subtrees of size $n$, not necessarily the whole coding tree.  All we need to have a successful reduction to size $n-1$ is to show the following:
\begin{eqnarray}
f^d(w_x,w_y) &=& \left(2^d w_x^{1+d}+2^d w_y^{1+d}\right)^{\frac{1}{1+d}} \label{lhs} \\
&\geq& w_x + w_y\label{rhs}
\end{eqnarray}
for $d > 0$, and 
\begin{equation}
f^d(w_x,w_y) \leq w_x + w_y
\label{rhs2}
\end{equation}
for $d \in (-1,0)$, with equality in either case if and only if $w_x = w_y$.  The inequalities are due to the identical property of the generalized mean in \cite[3.2.4]{AbSt}:
$$M(t)=\left(\frac{1}{m}\sum_{k=1}^m a_k^t\right)^\frac{1}{t}$$
with, in this case, $m=2$, $a_1 = 2w_x$, $a_2 = 2w_y$, and $t$ as $1+d$ in (\ref{lhs}) (left-hand side of (\ref{rhs2})) and $1$ on (\ref{rhs}) (right-hand side of~(\ref{rhs2})).

It immediately follows in the $d>0$ case that $f^d(w_x,w_y) > w_x$.  Thus, the first two weights of the entire tree merge form a weight no less than either original weight, and all remaining weights are also no less that those two weights.  Call the resulting lengths~${\mbox{\boldmath $l$}}'$.

To prove the second property, note that, after merging the aforementioned two least weighted items, we have $n-1$ weights, and thus a conforming reduced problem.  Call the combined weight $w'_{\mathop{\rm c}}$.  Then
\begin{eqnarray*}
w_{\mathop{\rm root}} &=& 2^{R^d(p,{\mbox{\boldmath \scriptsize $l$}})} \\
&=&\left({w'_{\mathop{\rm c}}}^{1+d}2^{(l_n-1)d} + \sum_{i=1}^{n-2} p_i^{1+d}2^{l_id} \right)^\frac{1}{d} \\
&=& \left(p_{n-1}^{1+d}2^{l_{n-1}d}+p_n^{1+d}2^{l_nd} + \sum_{i=1}^{n-2} p_i^{1+d}2^{l_id}\right)^\frac{1}{d} \\
&=& 2^{R^d(p,{\mbox{\boldmath \scriptsize $l$}})}
\end{eqnarray*}
where the third equality is due to $l_{n-1} = l_n$ and (\ref{mmprcomb}).

The third property is shown via the operation of the algorithm from
start to finish: First note that $\sum_i 2^{-l_i} = 1$ for any tree
created using the Huffman-like procedure, since all internal nodes
have two children.  Now think of the procedure as starting with a
priority queue of input items, ordered by nondecreasing weight from
head to tail.  After merging two items, obtained from the head, into
one compound item, that item is placed back into the queue.  Since we
are using a priority queue, the merged item is placed such that its
weight is no smaller than any item ahead of it and is smaller than any
item behind it.

In keeping items ordered, we obtain an optimal coding tree.  A first
derivative test shows that $f^d$ is nondecreasing on both inputs for
any $d$.  Thus merged items are created in nondecreasing weight.  If
$p_1 \leq f^d(p_{n-1},p_n)$, the first merged item can be inserted to
the tail of the queue; since merged items are created in nondecreasing
weight, subsequent items are as well.  This is a sufficient condition for a complete tree being optimal\cite[Lemma~2]{Baer07}.
\end{proof}

\psfrag{Rlb}{\scriptsize $R_{\mathop{\rm opt}}^d(p)$}
\psfrag{Rub}{\scriptsize $R_{\mathop{\rm opt}}^d(p)$}
\psfrag{pj}{\scriptsize $p_j$}
\psfrag{inf  }{\Tiny $\infty$}
\psfrag{1l   }{\Tiny $~~1$}
\psfrag{2l   }{\Tiny $~~2$}
\psfrag{4l   }{\Tiny $~~4$}
\psfrag{16l  }{\Tiny $16$}
\psfrag{-0.5l}{\Tiny $\!\!\!\!\!\!-0.5$}
\psfrag{-1l  }{\Tiny $\!\!-1$}

\begin{figure*}[ht]
     \centering
\psfrag{0l   }{\Tiny $\!\!0^-$}
     \subfigure[Upper bounds]
     {\includegraphics[width=.48\textwidth]{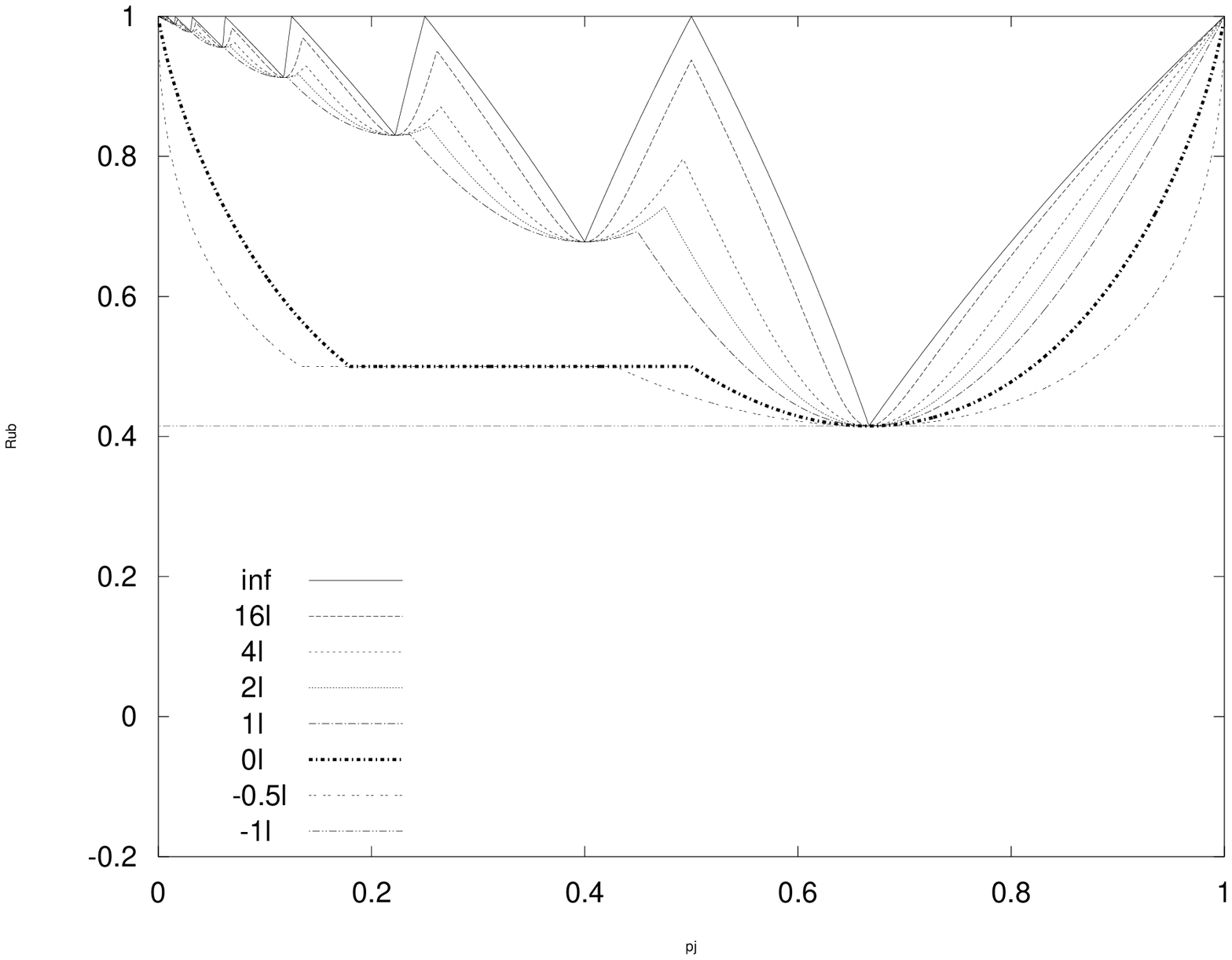}}
\psfrag{0l   }{\Tiny $~~0$}
     \subfigure[Lower bounds]
     {\includegraphics[width=.48\textwidth]{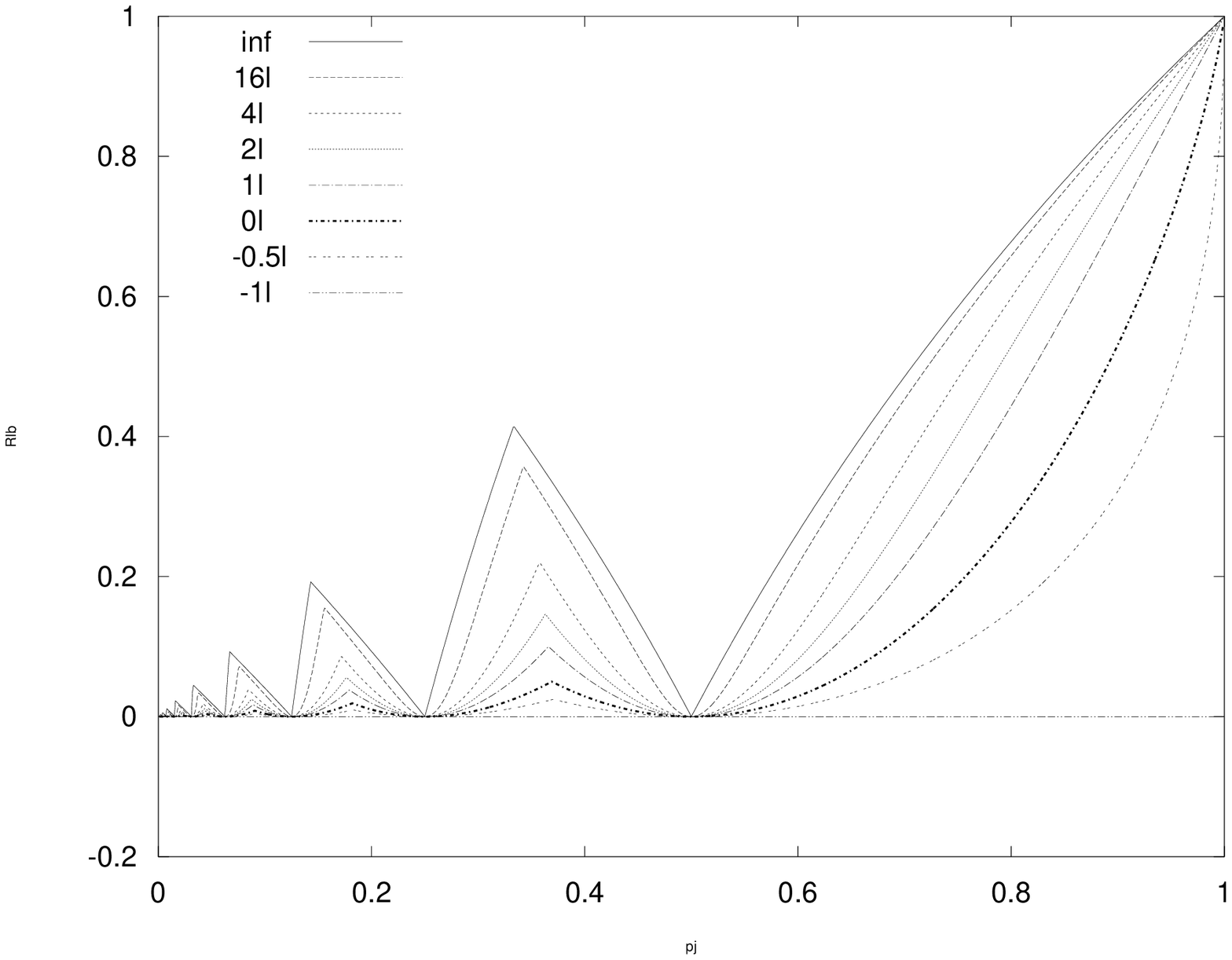}}
     \caption{Bounds on optimal $R_{\mathop{\rm opt}}^d(p)$ given $p_j$ over various $d$ (see legends).  The thick (dash-dotted) lines correspond to the usual linear redundancy utility ($d \rightarrow 0$), while the uppermost (solid) lines are minimum maximum pointwise redundancy ($d \rightarrow \infty$).  Lower bounds are tight over all $d>-1$, while upper bounds are only tight for minimum maximum pointwise redundancy, for $p_j \geq 0.5$ if $d \in (-1,\infty)$, and for $(0,\pi_0^d)$ if $d \in (-1,0)$, where $\pi_0^d$ as the first root of the equality of the two terms in the maximization at (\ref{tightupper}).  The tight upper bounds for $d < \infty$ are approached by $p = (p_j, 1-p_j-\epsilon, \epsilon)$.}
     \label{fig}
\end{figure*}
Next is our main result:
\begin{theorem}
Suppose we know $d>-1$ ($d \neq 0$) and one $p_j$ of probability mass function $p$ for which we want to find the optimal code ${\mbox{\boldmath $l$}}$ under exponential redundancy.  Consider functions

\begin{equation}
\omega^d(p_j) = \min_{\lambda \in {\mathbb Z}^+} \left(\lambda + \frac{1}{d} \lg\left(p_j^{1+d} + \frac{2^d(1-p_j)^{1+d}}{(2^{\lambda}-1)^d}\right)\right)
\label{p1b1}
\end{equation}
making transitions between $\lambda$ and $\lambda+1$ at 
$$\textstyle p_\lambda = \left(1+\left(
\left(1-2^{-d}\right)\left(\frac{1}{(2^\lambda-1)^d}-\frac{1}{(2^\lambda-0.5)^d}\right)^{-1}\right)^{\frac{1}{1+d}}\right)^{-1}$$
and
$$o^d(p_j) = \min_{\mu \in {\mathbb Z}^+} \left(\mu + \frac{1}{d}\lg\left(p_j^{1+d} + \frac{(1-p)^{1+d}}{(2^\mu-1)^d}\right)\right)$$
with transitions between $\mu$ and $\mu+1$ at
$$\textstyle p_\mu = \left(1+\left(
\left(2^d-1\right)\left(\frac{1}{(2^\mu-1)^d}-\frac{1}{(2^\mu-0.5)^d}\right)^{-1}\right)^{\frac{1}{1+d}}\right)^{-1}$$
These improve bounds on the optimal code, and the upper bound is a strict inequality, in that $$0 \leq o^d(p_j) \leq R^d(p,{\mbox{\boldmath $l$}}) < \omega^d(p_j) \leq 1.$$  Moreover, the lower bounds are achievable given $p_1$ and the upper bounds are approachable given $p_1 \geq 0.5$.  In addition, for $p_j < 0.5$ and $d<0$, we have the following secondary upper bound:
\begin{equation}
R^d(p,{\mbox{\boldmath $l$}}) < \max\left(0.5, \frac{1}{d} \lg \left(p_j^{1+d}4^d+(1-p_j)^{1+d}2^d\right)\right).
\label{tightupper}
\end{equation}
\end{theorem}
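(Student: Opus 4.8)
\noindent\emph{Proof plan.} The plan is to prove the two principal bounds by complementary arguments that isolate symbol $j$ and then relax integrality on the remaining codewords. For the lower bound, I would fix an optimal code and let $\mu=l_j\ge1$ be the integer length of symbol $j$; the remaining $n-1$ codewords form a prefix code whose Kraft sum is at most $B=1-2^{-\mu}$, so after rescaling those lengths by $\lg B$ and normalizing the remaining masses by $1-p_j$ they become a prefix code for a probability distribution, and the nonnegativity of exponential redundancy --- Campbell's inequality, in the form of (\ref{trans}) --- gives $\sum_{i\ne j}p_i^{1+d}2^{dl_i}\ge(1-p_j)^{1+d}/(1-2^{-\mu})^d$ when $d>0$, with the inequality reversed when $d<0$. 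Adding $p_j^{1+d}2^{d\mu}$ and applying $\frac1d\lg(\cdot)$ --- which is increasing for $d>0$ and decreasing for $d<0$, so the inequality closes the same way in both cases --- together with $(1-2^{-\mu})2^\mu=2^\mu-1$ yields $R^d(p,{\mbox{\boldmath $l$}})\ge\mu+\frac1d\lg(p_j^{1+d}+(1-p_j)^{1+d}/(2^\mu-1)^d)$. Since $\mu$ ranges over ${\mathbb Z}^+$, minimizing over $\mu$ gives $R^d(p,{\mbox{\boldmath $l$}})\ge o^d(p_j)$; each term of the minimand defining $o^d$ is itself the exponential redundancy of an explicit code on $(p_j,\frac{1-p_j}{2^\mu-1},\dots,\frac{1-p_j}{2^\mu-1})$, hence nonnegative, giving $o^d(p_j)\ge0$.

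For the upper bound, I would fix any $\lambda\in{\mathbb Z}^+$ and build a code placing symbol $j$ at depth $\lambda$ and the remaining symbols at lengths $\lceil l_i^\ast\rceil$, where $\{l_i^\ast\}$ achieves the integrality-relaxed optimum of $\sum_{i\ne j}p_i^{1+d}2^{dl_i}$ subject to $\sum_{i\ne j}2^{-l_i}\le1-2^{-\lambda}$; a Lagrange computation gives optimal value $(1-p_j)^{1+d}/(1-2^{-\lambda})^d$ and $l_i^\ast>0$, so each ceiling is at least $1$, and since rounding up can only tighten Kraft the construction is a legitimate prefix code. Because $l_i<l_i^\ast+1$ strictly, $2^{dl_i}$ differs from $2^{dl_i^\ast}$ by a factor strictly between $1$ and $2^{|d|}$ --- in exactly the direction that $\frac1d\lg(\cdot)$ turns into an overestimate --- so combining with $p_j^{1+d}2^{d\lambda}$ gives $R^d(p,{\mbox{\boldmath $l$}})<\lambda+\frac1d\lg(p_j^{1+d}+2^d(1-p_j)^{1+d}/(2^\lambda-1)^d)$; and when the relaxed optimum happens to be integral the same bound is still strict because $2^d\ne1$. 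The optimal code does no worse, so minimizing over $\lambda$ gives the strict inequality $R^d(p,{\mbox{\boldmath $l$}})<\omega^d(p_j)$, while $\omega^d(p_j)\le1$ --- the classical unit-sized bound --- comes out of evaluating the minimand at a suitable intermediate index $\lambda$ near $\lceil-\lg p_j\rceil$.

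The transition thresholds $p_\lambda$ and $p_\mu$ are obtained by the routine step of equating consecutive terms of the two minimands and solving for $p_j$; the factor-$2^d$ discrepancy between the two objectives is precisely what replaces $2^d-1$ (lower bound) by $1-2^{-d}$ (upper bound). For achievability, given $p_1$ the lower bound is attained by $(p_1,\frac{1-p_1}{2^\mu-1},\dots,\frac{1-p_1}{2^\mu-1})$ with $\mu$ the minimizing index: when $\mu=1$ this is the forced two-symbol code, and when $\mu\ge2$ Theorem~\ref{complete}(3) certifies that the depth-$\mu$ complete tree is optimal once one checks $p_1\le f^d(p_{n-1},p_n)=2(1-p_1)/(2^\mu-1)$ over the range of $p_1$ on which $\mu$ is minimal. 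The upper bound is approached, for $p_1\ge0.5$, by $p=(p_1,1-p_1-\epsilon,\epsilon)$ as $\epsilon\downarrow0$, whose unique tree shape forces exactly the $\lambda=1$ term.

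Finally, the secondary bound (\ref{tightupper}) for $d<0$, $p_j<0.5$ I would handle separately: invoking Theorem~\ref{complete}(1) (for $d<0$ every subtree's total probability is at least its weight) together with $w_{\mathop{\rm root}}=2^{R^d}$, a short case analysis on the position of symbol $j$ in an optimal tree bounds $R^d$ in each case, the extremal case matching $\frac1d\lg(p_j^{1+d}4^d+(1-p_j)^{1+d}2^d)$ (approached by $(1-p_j-\epsilon,p_j,\epsilon)$) while the remaining cases are absorbed by the crude bound $0.5$. I expect this last part --- keeping every inequality pointing the right way through the $d<0$ sign changes, and closing the case analysis tightly enough to land on $0.5$ rather than something larger --- to be the main obstacle; the $d>0$ reasoning above is comparatively mechanical.
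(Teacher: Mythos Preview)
Your arguments for the principal bounds $o^d$ and $\omega^d$ are essentially the paper's own, lightly repackaged. For the lower bound you normalize the $n-1$ remaining masses and rescale their lengths into the Kraft budget $1-2^{-\mu}$, then invoke $R^d\ge0$; the paper instead expands the remaining sum into $2^{l_n}-2^{l_n-l_j}$ equal-mass pieces and applies the monotonicity $H_{1+d}\le H_0$ of R\'enyi entropy. These are two faces of the same convexity fact, and yours is arguably the cleaner phrasing. For the upper bound your Lagrangian relaxation produces exactly the code the paper writes down explicitly, namely $l_j=\lambda$ and $l_i=\lceil-\lg(p_i(1-2^{-\lambda})/(1-p_j))\rceil$ for $i\ne j$, and the subsequent estimate is identical. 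Your achievability distribution for the lower bound, with $2^\mu-1$ equal masses rather than the paper's $2^{\mu+1}-2$, also works; both sit on a complete tree and evaluate to the same term of $o^d$.

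The genuine gap is the secondary bound~(\ref{tightupper}). The paper does \emph{not} argue via the subtree weight--probability comparison of Theorem~\ref{complete}; instead it uses Lyapunov's inequality for moments to obtain $R^d(p,{\mbox{\boldmath $l$}})\le R^0(p,{\mbox{\boldmath $l$}})$ for every $d<0$, which reduces the problem to the ordinary (linear) Huffman redundancy. Then known tight linear bounds---specifically Manstetten's $2-\lg 3$ bound for $p_1<0.5$---dispose of the case $p_j=p_1$, while for $p_j<p_1$ one exploits $p_1\in[p_j,1-p_j]$ together with the already-proved main bound~(\ref{p1b1}) at $j=1$, whose maximum over that range is attained at $p_1=1-p_j$ and equals the second term of the max in~(\ref{tightupper}). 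The $0.5$ floor then emerges from the same maximization once $p_j$ exceeds the crossover~$\pi_0^d$. Your proposed route through a tree-position case analysis does not obviously deliver the $0.5$ part: that constant is inherited from the linear theory, and without the Lyapunov reduction you have no mechanism in hand that caps $R^d$ by the linear redundancy. So while the extremal term $\frac1d\lg(p_j^{1+d}4^d+(1-p_j)^{1+d}2^d)$ is indeed just $\omega^d(1-p_j)$ and can plausibly be reached by your method, closing the ``remaining cases'' at $0.5$ is where your plan is incomplete, and the paper's Lyapunov step is the missing ingredient.
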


\begin{proof}

\subsubsection{Lower bound}

The lower bound calculation is:
\begin{eqnarray*}
R^d(p,{\mbox{\boldmath $l$}}) &=& \frac{1}{d} \lg \sum_{i \in {\mathcal X}} p_i^{1+d} 2^{dl_i} \\
&=& \frac{1}{d} \lg \bigg(p_j^{1+d}2^{dl_j} + (1-p_j)^{1+d} 2^{dl_n} \cdot\\
&&\quad \sum_{i \in {\mathcal X} \backslash \{j\}} 2^{l_n-l_i} \left(\frac{p_i2^{l_i-l_n}}{1-p_j}\right)^{1+d} \bigg)\\
&\stackrel{{\mbox{\footnotesize (a)}}}{=}{}& \frac{1}{d} \lg \bigg(p_j^{1+d}2^{dl_j} + (1-p_j)^{1+d}2^{dl_n}\cdot\\
&&\quad \sum_{i \in {\mathcal X} \backslash \{j\}} \sum_{k=1}^{2^{l_n-l_i}} \left(\frac{p_i2^{l_i-l_n}}{1-p_j}\right)^{1+d} \bigg)\\
&\stackrel{{\mbox{\footnotesize (b)}}}{\geq}{}& \frac{1}{d} \lg \bigg(p_j^{1+d}2^{dl_j} + \\
&&\quad (1-p_j)^{1+d}2^{dl_n} \left(2^{l_n}-2^{l_n-l_j}\right)^{-d}\bigg)\\
&=& l_j + \frac{1}{d}\lg \left(p_j^{1+d} + (1-p_j)^{1+d} \left(2^{l_j}-1\right)^{-d}\right)
\end{eqnarray*}
The first equality is due to the definition, while the other equalities follow from algebra.  The summation following (a) is a sum of the $(1+d)$ power of $2^{l_n} - 2^{l_n-l_j}$ positive terms which sum to~$1$.  Consider these values, which include $2^{l_n-l_i}$ repetitions of each $p_i2^{l_i-l_n}/(1-p_j)$ for $i \neq j$, as a probability distribution called~$q$.  Then the summation is related to the $(1+d)$-R\'{e}nyi entropy of $q$; substituting using its definition (\ref{entropy}) leads to (\ref{qentropy}) below.  Furthermore, because $H_0(q) = \lg \|q\|$ and $H_\alpha$ is nonincreasing with $\alpha$, (\ref{qentropy}) is bounded as follows:
\begin{eqnarray}
\left(\sum_{m=1}^{2^{l_n} - 2^{l_n-l_j}} q_m^{1+d}\right)^{\frac{1}{d}} &=& 2^{-H_{1+d}(q)}.\label{qentropy}\\
&\geq&2^{-\lg\|q\|} = (2^{l_n} - 2^{l_n-l_j})^{-1}. \nonumber
\end{eqnarray}
This results in inequality (b), completing the lower bound by substituting minimizing $\mu$ for~$l_j$.  The transitions follow from algebraically finding where there are two minimizing values.

A code achieving this lower bound, for $p_1 = p_j \in [1/(2^{\mu+1}-1),1/2^{\mu})$ for some $\mu$, is $$\left(p_1, \underbrace{\frac{1-p_1}{2^{\mu+1}-2}, \ldots,
\frac{1-p_1}{2^{\mu+1}-2}}_{2^{\mu+1}-2}\right).$$ By
Theorem~\ref{complete}, this has a complete coding tree --- recall $f^d(w_x,w_x) =2w_x$ --- in this case with $l_1$ one bit shorter than the other lengths.  This is easily calculated as achieving the lower bound.

\subsubsection{Upper bounds}

Consider the following code for an arbitrary $\lambda$, as in \cite{YeYe2}:

$$
l_i^j(p) = \left\{
\begin{array}{ll}
\lambda,& i = j \\ 
\left\lceil -\lg
\left(p_i\left(\frac{1-2^{-\lambda}}{1-p_j}\right)\right) \right\rceil,& i \neq j
\end{array}
\right.
$$
Satisfying the Kraft inequality, it is a valid --- possibly suboptimal --- code, and thus has a utility that upper-bounds that of the optimal code.  Thus:

\begin{eqnarray*}
R^d(p,{\mbox{\boldmath $l$}}) &=& \frac{1}{d} \lg \sum_{i \in {\mathcal X}} p_i^{1+d} 2^{dl_i} \\
&\leq& \frac{1}{d} \lg \Bigg(p_j^{1+d} 2^{d\lambda} + \\
&&\quad \sum_{i \in {\mathcal X} \backslash \{j\}} p_i^{1+d} 2^{d\left\lceil -\lg \left(p_i(1-2^{-\lambda})/(1-p_j)\right) \right\rceil}\Bigg)\\
&<& \frac{1}{d} \lg \Bigg(p_j^{1+d} 2^{d\lambda} + \\
&&\quad \sum_{i \in {\mathcal X} \backslash \{j\}} p_i^{1+d} \left(\frac{p_i}2 \cdot \frac{1-2^{-\lambda}}{1-p_j}\right)^{-d} \Bigg)\\
&=& \frac{1}{d} \lg\left(p_j^{1+d}2^{d\lambda} + (1-p_j)^{1+d}\left(\frac{2}{1-2^{-\lambda}}\right)^d\right)
\end{eqnarray*}
Since $\lambda$ is arbitrary, the bound is obtained by choosing the value offering the strictest bound.  This upper bound is approached for any $d>-1$ over $p_1 = p_j \in (0.5,1)$ for $p = (p_j, 1-p_j-\epsilon, \epsilon)$ (i.e., $j=1$ and $\lambda=1$).

Now consider $d<0$ and $p_j < 0.5$.  As noted in \cite{Baer11}, an
application of Lyapunov's inequality for moments\cite[p.~27]{HLP}
yields $R^{d'}(p,{\mbox{\boldmath $l$}}) \leq R^d(p,{\mbox{\boldmath $l$}})$ for $d' \leq d$, and,
in particular, $R^d(p,{\mbox{\boldmath $l$}}) \leq R^0(p,{\mbox{\boldmath $l$}})$ in this case,
where
$$R^0(p,{\mbox{\boldmath $l$}}) = \sum_{i \in {\mathcal X}} p_i l_i - H_1(p)$$
via limits.  Since this is true
for all values, it is true over the minimization, and bounds for the
usual linear case apply here.  In particular, as found in
\cite{Mans} and noted in \cite{MPK}, if we define
\begin{equation}
f(p_1) = \left\{
\begin{array}{ll}
3 - 5p_1 - H_1(2p_1) & \pi_1 \leq p_1 < 0.5\\
2 - \lg 3 & 0 < p_1 < \pi_1
\end{array}
\label{p1b2}
\right.
\end{equation}
where $\pi_1 \approx 0.491$ is the root of the equality of the two terms, then this serves as an upper bound (given most probable $p_1$) on optimal redundancy (linear, and thus also $d<0$) in $(0,0.5)$.

Since this never exceeds the bound we seek here, we can now consider only
$p_j < p_1$.  Consider first those cases in which
(\ref{tightupper}) is greater than $0.5$.  In these cases, we use the
fact that $p_1 \in [p_j,1-p_j]$ to note that the maximum upper bound
over this range --- using (\ref{p1b1}) and (\ref{p1b2}) --- is
$\omega^d(p_1)$ at $p_1 = 1-p_j$, thus supplying the upper bound for
the range $(0,\pi_0^d)$, where $\pi_0^d$ is the first root of the equality 
of the two terms in the maximization at (\ref{tightupper}).

Over $p_j \in (\pi_0^d,0.5)$, we first note that $0.5$ is an upper bound via similar logic: If $p_1 \leq 0.5$, we already know that this is an upper bound.  Otherwise $p_1 \in (0.5,1-\pi_0^d)$, and (\ref{p1b1}) using $j = 1$ provides an upper bound not exceeding~$0.5$.  
\end{proof}

Fig.~\ref{fig} illustrates these bounds at a handful of values, and at limits $-1$, $0$, and~$\infty$.  For $d \rightarrow 0$, l'H\^{o}pital's rule reveals the lower bound to be the optimal one of Theorem~2 of \cite{MoAb} for $j=1$ and Theorem~4 of \cite{MPK} for arbitrary~$j$.  If one replaces optimal $\lambda$ with (possibly suboptimal) $\lceil -\lg p_j \rceil$, the upper bound becomes the suboptimal one of Lemma~1 of \cite{YeYe2}.  Taking $d \rightarrow \infty$ using, for any positive $x,y,a,b$,
$$\lim_{d \rightarrow \infty} \frac{1}{d} \lg(xa^d+yb^d) = \lg\max(a,b)$$
yields the optimal bounds of \cite{Baer11}, which are both tight.

The upper bound is clearly not optimal here, since it is not optimal for $d \rightarrow 0$ from either direction.  However, the following fact might be of help in improving this in future work:

\begin{theorem}
If $d < 0$ and $p_1 \geq 0.4$, an optimal code exists with $l_1 = 1$.
\end{theorem}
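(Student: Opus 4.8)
The plan is to argue by contradiction: suppose every optimal code has $l_1\ge 2$, fix one optimal coding tree $T$, and — using the exchange argument given before Theorem~\ref{complete} — realise $T$ as an output of the algorithm of~(\ref{mmprcomb}). First I would pin down the structure near item~$1$. Were some leaf at depth~$1$, swapping it with item~$1$ could not increase $R^d$ (the change in $\sum_i p_i^{1+d}2^{dl_i}$ is $(p_1^{1+d}-p_i^{1+d})(2^d-2^{dl_1})\ge 0$ since $d<0$), so we may assume both children of the root are internal. Look at the step of the algorithm that first merges the leaf of item~$1$ with its sibling subtree $\hat S$ (weight $s$), creating a node $u$. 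At that step item~$1$ still has weight $p_1$ and is among the two lightest items in the queue, so every other queue item has weight $\ge p_1\ge 0.4$; and since for $d<0$ each merge replaces $w_x,w_y$ by $f^d(w_x,w_y)\le w_x+w_y$ (Theorem~\ref{complete}), the total queue weight never exceeds $1$. Hence at most two queue items have weight $\ge 0.4$, and — item~$1$ being one of the two lightest — at most one has weight $<p_1$, so the queue has at most three items; if only two, then $u$ is the root and $l_1=1$, a contradiction. Thus the queue is $\{\hat S\,(s),\ \mathrm{item}~1\,(p_1),\ Y\,(y)\}$ with $y\ge p_1$, and $T$ is $\mathrm{root}\to u,Y$ with $u\to\mathrm{leaf}~1,\hat S$ and $Y\to Y_0,Y_1$ (say $w_{Y_0}\ge w_{Y_1}$), so $l_1=2$.

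Next I would knock out this configuration. If $Y$ is a single leaf, its probability is $y\ge p_1$, hence $=p_1$ by maximality, and relabelling gives an optimal code with $l_1=1$. Otherwise $Y_0,Y_1$ are genuine subtrees at depth~$2$; since $f^d$ is nondecreasing in both arguments with $f^d(w,w)=2w$, $y=f^d(w_{Y_0},w_{Y_1})\le 2w_{Y_0}$, and $p(Y_0)\ge w_{Y_0}$ by Theorem~\ref{complete} (the $d<0$ case), so $p(Y_0)\ge y/2\ge p_1/2$. Compare $T$ with $T'=\bigl(\mathrm{root}\to\mathrm{leaf}~1,w;\ w\to Y_0,w';\ w'\to Y_1,\hat S\bigr)$, which raises item~$1$ by one level, lowers every item of $Y_1$ and of $\hat S$ by one level, and leaves $Y_0$ untouched. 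A short computation yields
$$\sum_{i\in{\mathcal X}}p_i^{1+d}2^{dl_i^{T'}}-\sum_{i\in{\mathcal X}}p_i^{1+d}2^{dl_i^{T}}=(1-2^d)\Bigl(p_1^{1+d}2^d-\sum_{i\in Y_1\cup\hat S}p_i^{1+d}2^{dl_i^{T}}\Bigr).$$
Since $Y_1$ and $\hat S$ are complete subtrees rooted at depth~$2$, $\sum_{i\in Y_1\cup\hat S}2^{-l_i^{T}}=\tfrac14+\tfrac14=\tfrac12$; applying Hölder's inequality with the conjugate exponents $\tfrac{1}{1+d}$ and $\tfrac{1}{-d}$ (both $>1$ for $d\in(-1,0)$) gives
$$\sum_{i\in Y_1\cup\hat S}p_i^{1+d}2^{dl_i^{T}}=\sum_{i\in Y_1\cup\hat S}p_i^{1+d}\bigl(2^{-l_i^{T}}\bigr)^{-d}\le\bigl(p(Y_1)+p(\hat S)\bigr)^{1+d}2^{d}.$$
As $p(Y_1)+p(\hat S)=1-p_1-p(Y_0)\le 1-\tfrac32 p_1$ and $1-\tfrac32 p_1\le p_1$ precisely when $p_1\ge\tfrac25$, the increasing map $x\mapsto x^{1+d}$ makes the bracketed difference $\ge 0$, strictly so when $y>p_1$. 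Hence $R^d(p,{\mbox{\boldmath $l$}}^{T'})\le R^d(p,{\mbox{\boldmath $l$}}^{T})$, which contradicts optimality of $T$ when $y>p_1$ and otherwise exhibits an optimal code with $l_1=1$; every branch contradicts the standing assumption.

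The hard part is choosing $T'$ and estimating its cost correctly. The natural surgeries — pushing the whole ``other'' root-subtree down by one, or bounding $p_i^{1+d}2^{dl_i}\le(1+d)p_i+(-d)2^{-l_i}$ by the weighted arithmetic--geometric mean — are both too wasteful; the AM--GM version in particular misses by a hair throughout the interior of $(-1,0)$. What makes it go through is to split $Y$, leave its heavier half $Y_0$ fixed so that the only mass that moves is $p(Y_1)+p(\hat S)$, and then use Hölder rather than AM--GM, which is exactly sharp enough to upgrade the structural estimate $p(Y_0)\ge p_1/2$ into the threshold $p_1\ge 2/5$. The remaining work — the queue-weight bookkeeping that confines the critical configuration to three items, and clearing the degenerate corners ($y=p_1$, $Y$ a single leaf, and the all-equal case, which needs $p_1\le 1/3$ and so never arises) — is routine once this core is in place.
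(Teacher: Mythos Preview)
Your proof is correct, and it reaches the same threshold $p_1\ge 2/5$, but the endgame is genuinely different from the paper's. Both arguments open identically: realise an optimal tree via the Huffman-like algorithm, and use the $d<0$ contraction $f^d(w_x,w_y)\le w_x+w_y$ (so total queue weight never exceeds~$1$) to force at most three items to remain at the step where item~$1$ is first merged. Where you diverge is in dispatching the three-item configuration $\{\mathrm{item}\ 1,\hat S,Y\}$. The paper stays entirely in the weight domain: writing $S_2^3,S_3^3$ for the two non-singleton items with $w(S_2^3)\ge w(S_3^3)$, if $S_2^3$ is compound then its larger child $S_3^4$ satisfies $w(S_3^4)\le w(S_3^3)$ by combination order, whence $w(S_2^3)\le 2w(S_3^3)$ and $1.5\,w(S_2^3)\le w(S_2^3)+w(S_3^3)\le 1-p_1\le 0.6$, so both weights are $\le 0.4\le p_1$ and item~$1$ is merged last. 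Your route is analytic: you perform an explicit tree surgery---promote item~$1$ to depth~$1$, leave the heavier half $Y_0$ of $Y$ fixed, demote $Y_1$ and~$\hat S$---and control the cost change via H\"older with conjugate exponents $\tfrac{1}{1+d},\tfrac{1}{-d}$, combined with the probability bound $p(Y_0)\ge w_{Y_0}\ge y/2\ge p_1/2$ from Theorem~\ref{complete}. The paper's argument is shorter and uses nothing beyond monotonicity of~$f^d$; yours sidesteps the somewhat delicate ``combination order'' step (which needs care since merges for $d<0$ are not weight-monotone) at the price of an analytic inequality, and the surgery-plus-H\"older template is a reusable device for other redundancy rearrangement bounds. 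One cosmetic slip: ``every other queue item has weight $\ge p_1$'' overstates---only all but at most one do---but you immediately restate this correctly and the count $k\le 3$ is unaffected.
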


\begin{proof}
The approach here is similar to \cite{John}.  Consider the coding
step at which item $1$ gets combined with other items; we wish to
prove that this is the last step.  At the beginning of this step the
(possibly merged) items left to combine are $\{1\}, S_2^k, S_3^k,
\ldots, S_k^k$, where we use $S_j^k$ to denote the set of (individual)
items combined into a (possibly) compound item, and $w(S_j^k)$ to
denote its weight.  At this step, $p_1$ is smaller than all but
possibly one of $S_j^k$, so $(k-1)p_1 \geq (k-1)0.4$ is less than the sum of
weights, which in turn is less than or equal to~$1$.  Thus $k$ is at
most three.

Consider items $\{1\}$, $S_2^3$, and~$S_3^3$.  Assume without loss of
generality that $w(S_2^3) \geq w(S_3^3)$.  If $w(S_2^3)$ is not
compound, $\{1\}$ has the greatest weight and we are finished.  If it
is compound, call its two subtrees $S_3^4$ and $S_4^4$, in order of
nonincreasing weight.  Clearly $w(S_3^4) \leq w(S_3^3)$ due to the
combination order, so $w(S_2^3) \leq 2w(S_3^3)$.  Thus $1.5 w(S_2^3)
\leq w(S_2^3) + w(S_3^3) \leq 0.6$, so $w(S_3^3) \leq w(S_2^3) \leq
0.4$, and we can combine these two items to achieve the optimal code.
This is tight in the sense that $(p_1, (1-p_1)/3, (1-p_1)/3, (1-p_1)/3)$
has $l_1 = 2$ for $p_1 \in (0.25, 0.4)$.
\end{proof}

As an example of the improvement these bounds offer, we revisit the examples of \cite{Baer11}, which consider minimizing $L_a$ over Benford's distribution\cite{Newc,Benf}:
$$p_i = \log_{10}(i+1) - \log_{10}(i), ~ i = 1, 2, \ldots 9$$
for $a=0.6$ and $a=2$ given~$p_1$.  The bounds of \cite{Baer11} show that optimal $L_{0.6}$ for such a $p_1$ must lie in $[2.372\ldots,2.707\ldots)$.  This is identical to the application of the current result, which should not surprise, as the prior bounds apply and are tight in cases where we can show --- as in this case --- that $l_1 = 1$.  A more interesting case is that of $a=2$, for which the prior bounds, $[3.039\ldots,3.910\ldots]$, are superseded by the tighter $[3.051\ldots,3.863\ldots)$; optimal $L_2 = 3.099\ldots$.

\ifx \cyr \undefined \let \cyr = \relax \fi

\end{document}